\theoremstyle{plain}
\newtheorem{theorem}{Theorem}
\newtheorem{lemma}[theorem]{Lemma}
\theoremstyle{definition}
\date{\today}
\begin{document}

\title{Non-Markovianity Hierarchy of Gaussian Processes and Quantum Amplification}

\author{Pietro Liuzzo-Scorpo}
\affiliation{Centre for the Mathematics and Theoretical Physics of Quantum Non-Equilibrium Systems, \\ $\mbox{School of Mathematical Sciences, The University of Nottingham,
University Park, Nottingham NG7 2RD, United Kingdom}$}

\author{Wojciech Roga}
\affiliation{Department of Physics, University of Strathclyde, John Anderson Building, 107
Rottenrow, Glasgow G4 0NG, United Kingdom}

\author{Leonardo A. M. Souza}
\affiliation{Universidade Federal de Vi\c{c}çosa -- Campus Florestal, LMG818 Km6, Minas Gerais, Florestal 35690-000, Brazil}
\affiliation{Centre for the Mathematics and Theoretical Physics of Quantum Non-Equilibrium Systems, \\ $\mbox{School of Mathematical Sciences, The University of Nottingham,
University Park, Nottingham NG7 2RD, United Kingdom}$}

\author{Nadja K. Bernardes}
\affiliation{Departamento de F\'isica, Universidade Federal de Minas Gerais, Belo Horizonte, Caixa Postal 702, 30161-970, Brazil}

\author{Gerardo Adesso}
\affiliation{Centre for the Mathematics and Theoretical Physics of Quantum Non-Equilibrium Systems, \\ $\mbox{School of Mathematical Sciences, The University of Nottingham,
University Park, Nottingham NG7 2RD, United Kingdom}$}

\begin{abstract}
We investigate dynamics of Gaussian states of continuous variable systems under Gaussianity preserving channels. We introduce a hierarchy of such evolutions encompassing Markovian, weakly and strongly non-Markovian processes, and provide simple criteria to distinguish between the classes, based on the degree of positivity of intermediate Gaussian maps. We present an intuitive classification of  all one-mode Gaussian channels according to their non-Markovianity degree, and show that weak non-Markovianity has an operational significance as it leads to temporary phase-insensitive amplification of Gaussian inputs beyond the fundamental quantum limit. Explicit examples and applications are discussed.
\end{abstract}

\pacs{03.65.Yz, 03.65.Ta, 42.50.Lc, 03.67.-a}

\maketitle

\paragraph*{Introduction.}


Non-Markovian evolutions of open quantum systems have been extensively studied in recent years \cite{rivas2014,BreuerRMP,Alonso2016}. During these evolutions, memory effects appear in many forms 
\cite{rivas2014,BreuerRMP,wolf2008,breuer2009,rivas2010,lu2010,Laine2010,lorenzo2013,bylicka2014,Dhar2015,Souza2015,Buscemi2016,torre2015,Chruscinski2014,Modi2016}.
These effects can lead to enhancements in quantum computation, e.g.~for error correction or decoherence suppression \cite{laine2014,He2011,Biercuk2009,myatt2000,frank2009,Tong2010,Lue2013,haseli2014}, in quantum cryptography \cite{Vasile2011}, limiting the information accessible to the eavesdropper, and possibly in the efficiency of certain processes at the intersection between quantum physics and biology \cite{thorwart2009,chin2013,huelga2013}. Experimental techniques are now mature to investigate open quantum systems beyond the Markovian regime \cite{Liu2011,Chiuri2012,Tang2012,Xu2013,Liu2013,Fanchini2014,Orieux2015,Jin2015,Eisert2015,Bernardes2015,Bernardes2016}.

A quantum process defined by a completely positive (CP) dynamical map $\Lambda_t$ is Markovian if it is CP-divisible, i.e.~such that  an intermediate map $\tilde{\Lambda}_{t+\tau,t}$, defined by $\Lambda_{t+\tau}=\tilde{\Lambda}_{t+\tau,t}\Lambda_t$,
is CP for all  $t,\tau>0$. A CP map can indeed be represented by an interaction of the evolving system with an uncorrelated environment \cite{Stinespring1955}: lack of correlations at each step denotes lack of memory, hence Markovianity. On the other hand, we recognize a non-Markovian process when its description cannot be found among CP-divisible maps. In this case, correlations between system and environment are essential at some stage.

The association of Markovian processes with CP-divisible maps results in important restrictions. For instance, entanglement,  mutual information, or quantum channel capacity, cannot increase if a CP map is applied locally to the subsystems. Similarly, measures of state distinguishability, like fidelity or trace distance, are contractive under CP maps. A violation of CP-divisibility is then witnessed by the temporary increase of these quantities \cite{breuer2009,rivas2010,lorenzo2013,bylicka2014,lu2010,Laine2010,Dhar2015,Souza2015,Buscemi2016}. Proper measures of non-Markovianity rely on direct examination of complete positivity of all intermediate maps \cite{rivas2010,torre2015,Chruscinski2014,Modi2016}. A unified picture of several quantifiers of non-Markovianity has been presented in \cite{Chruscinski2014}, where a hierarchy of non-Markovianity degrees was introduced, based on the smallest degree of positivity of intermediate maps.

Further important insight into non-Markovian processes is achieved considering evolutions of quantum states living in infinite-dimensional Hilbert spaces, such as states of light. These states are described by continuous variables (CV) related to quadratures (position and momentum operators) \cite{ournewreview}. In the CV formalism, complete positivity of maps amounts to fulfilment of the uncertainty principle for any legitimate input state. Clearly, monotonicity of distinguishability or entanglement
under CP-divisible maps extends to  CV systems as well.

In this Letter we consider evolutions of Gaussian quantum states governed by Gaussianity preserving processes (Gaussian maps). Taking inspiration from the line drawn in \cite{Chruscinski2014} for finite-dimensional processes, and building on the methods of \cite{torre2015} where an elegant characterization of (non-)\!\!~Markovianity was accomplished for Gaussian maps in terms of CP-divisibility, we identify a simple general hierarchy based on the divisibility degree of Gaussian maps with Gaussian inputs. This hierarchy, obtained by introducing an intermediate notion of Gaussian $k$-positivity and providing necessary and sufficient criteria for it, allows us to distinguish three classes of processes: Markovian, weakly and strongly non-Markovian.

We then classify all one-mode Gaussian channels according to their non-Markovianity degree, using an intuitive pictorial diagram divided in three regions, one per each class of the hierarchy. We relate the latter to possibilities and limitations for quantum amplification. In particular, weakly non-Markovian phase-insensitive channels allow, during some intermediate time, for amplification of Gaussian inputs with less added noise  than the fundamental quantum limit \cite{Haus1962,Caves1982,Clerk2010,Eleftheriadou2013}. This provides a fascinating operational interpretation for weak non-Markovianity, which had gone unnoticed before \cite{Chruscinski2014}. 

\paragraph*{Gaussian states and Gaussian maps.}
Given an $n$-mode CV system, Gaussian states $\rho$ are defined as those having a Gaussian characteristic function in phase space \cite{adesso2007,ournewreview,paris2005}. These states are fully characterized by the first and second statistical moments of their quadrature vector $\hat{O}=\{\hat{q}_1,\hat{p}_1,\dots,\hat{q}_n,\hat{p}_n\}$, where $\hat{q}_j$ and $\hat{p}_k$ are canonically conjugate coordinates satisfying  $[\hat{q}_j,\hat{p}_k]=\mbox{$\frac{i}{2}$}\delta_{jk}$ (in natural units, $\hslash=1$). The first moment vector $D=\langle\hat{O}\rangle$ is also called the displacement vector, while the second moments ${(\sigma_n)}_{jk}=\frac12\langle\{\hat O_j, \hat O_k\}_+\rangle$ form the covariance matrix $\sigma_n$. All physical states must satisfy the Robertson-Schr\"{o}dinger uncertainty relation, $\sigma_n\geq\mbox{$\frac{i}{2}$}\Omega_n$, where $\Omega_n = {{\ 0 \ \ 1} \choose {-1 \ 0}}^{\oplus n}$ is the $n$-mode symplectic matrix \cite{Simon94}.

Quantum channels that preserve Gaussianity of their inputs are known as Gaussian maps. A Gaussian map acting on $n$-mode Gaussian states is represented by a pair of $2n \times 2n$ matrices $(X,Y)$, with $Y$ symmetric, acting on the displacement vector $D$ and the covariance matrix $\sigma_n$ as follows \cite{nogo1,nogo2,nogo3,torre2015,Lindblad2002,heinosaari2010},
\begin{equation}
D \ \  \rightarrow \ \ D'=XD\,, \qquad  \sigma_n \ \  \rightarrow \ \ \sigma'_n=X\sigma_n X^T+Y\,. \label{xsigy}
\end{equation}
A Gaussian map described by the pair $(X,Y)$ is CP if and only if the following well-known inequality is fulfilled \cite{torre2015,Lindblad2002,heinosaari2010,DePalma2015}
\begin{equation}\label{eq:cp}
Y-\mbox{$\frac{i}{2}$}\Omega_n+\mbox{$\frac{i}{2}$}X\Omega_nX^T\geq0~.
\end{equation}
This inequality can be obtained from the Stinespring dilation theorem \cite{Stinespring1955}, i.e.~considering the Gaussian map as the result of a Gaussian unitary evolution acting on system and environment, initialized in an uncorrelated $(n+m)$-mode Gaussian state, followed by partial trace over the $m$ environment modes, $\sigma_n \rightarrow \sigma'_n = {\rm Tr}_E \left[S ( \sigma_n \oplus \sigma_m^E ) S^T\right]$; here one uses the fact that a Gaussian unitary is represented by a symplectic transformation $S \in \text{Sp}(2(n+m),\mathbb{R})$ (i.e., one that preserves the symplectic matrix $\Omega$) acting by congruence on covariance matrices~\cite{ournewreview}.

\paragraph*{$k$-positivity of Gaussian maps.}
We now  introduce a notion of $k$-positivity for Gaussian maps with Gaussian inputs, inspired by the hierarchy of $k$-positivity for finite-dimensional  channels arising from Choi's theorem \cite{Choi1975}. We define a Gaussian map acting on $n$-mode Gaussian inputs as $k$-positive ($k$P) if its extension on $k$ additional modes is positive, i.e.~if, for all $(n+k)$-mode Gaussian states described by covariance matrices $\sigma_{n+k}\geq \frac{i}{2}\Omega_{n+k}$, it holds
\begin{equation}\label{eq:kpos}
(X\oplus\mathds{1}_k)\sigma_{n+k}(X\oplus\mathds{1}_k)^T+Y\oplus\mathbb{0}_k\geq \mbox{$\frac{i}{2}$}\Omega_{n+k}~.
\end{equation}

Interestingly, we prove in the Appendix \cite{epaps} that a Gaussian map with Gaussian inputs is CP if and only if it is $k$P with any $k \geq 1$. Precisely, we establish the following result.
\begin{theorem}\label{thm1}
For any $n$, the CP condition (\ref{eq:cp}) is equivalent to the $k$P condition (\ref{eq:kpos}) with $k=1$.
\end{theorem}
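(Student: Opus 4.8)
\emph{Proof idea.} The implication $(\ref{eq:cp})\Rightarrow(\ref{eq:kpos})$ with $k=1$ is immediate: running the map on an $(n+1)$-mode input amounts to replacing $(X,Y)$ by the enlarged pair $(X\oplus\mathds{1}_1,\,Y\oplus\mathbb{0}_1)$, and the left-hand side of the CP inequality (\ref{eq:cp}) for this pair is $\big(Y-\tfrac{i}{2}\Omega_n+\tfrac{i}{2}X\Omega_nX^{T}\big)\oplus\mathbb{0}_1$, positive semidefinite whenever (\ref{eq:cp}) holds; so the enlarged pair satisfies (\ref{eq:cp}) at $n+1$ modes, hence is CP and in particular positive, which is exactly (\ref{eq:kpos}) with $k=1$. (Equivalently: tensoring a CP map with the one-mode identity channel gives a CP, hence positive, map.)

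For the converse I would argue by contraposition and build an explicit single-mode extension that is not positive. Suppose (\ref{eq:cp}) fails, so $v^{\dagger}Mv<0$ for some $v\in\mathbb{C}^{2n}$, with $M:=Y-\tfrac{i}{2}\Omega_n+\tfrac{i}{2}X\Omega_nX^{T}$, and set $u:=X^{T}v$. The plan is to find an $(n+1)$-mode Gaussian input $\sigma_{n+1}$ for which, after eliminating the ancilla mode by a Schur complement, the $2n\times 2n$ matrix $R$ ``seen'' by the system obeys $R\ge\tfrac{i}{2}\Omega_n$ automatically (as $\sigma_{n+1}$ is legitimate) but $XRX^{T}+Y\not\ge\tfrac{i}{2}\Omega_n$, violating (\ref{eq:kpos}). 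Testing the latter on $v$ and writing $v^{\dagger}(XRX^{T}+Y-\tfrac{i}{2}\Omega_n)v=u^{\dagger}Ru+v^{\dagger}(Y-\tfrac{i}{2}\Omega_n)v$, it will be enough to arrange $u\in\ker(R-\tfrac{i}{2}\Omega_n)$: then $u^{\dagger}Ru=\tfrac{i}{2}u^{\dagger}\Omega_n u=\tfrac{i}{2}v^{\dagger}X\Omega_nX^{T}v$ and the left-hand side collapses to $v^{\dagger}Mv<0$.

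The construction of $\sigma_{n+1}$ will split according to the purely imaginary scalar $u^{\dagger}\Omega_n u=2i\,(\mathrm{Re}\,u)^{T}\Omega_n(\mathrm{Im}\,u)$. If it vanishes (in particular if $u$ is real or $u=0$), then $\mathrm{Re}\,u$ and $\mathrm{Im}\,u$ span a subspace of dimension $\le 2$ that is isotropic for $\Omega_n$, and a product input suffices: an $n$-mode state squeezed strongly along that subspace, tensored with the ancilla vacuum, drives $u^{\dagger}\sigma_n u=(\mathrm{Re}\,u)^{T}\sigma_n(\mathrm{Re}\,u)+(\mathrm{Im}\,u)^{T}\sigma_n(\mathrm{Im}\,u)$ to zero, so even $X\sigma_nX^{T}+Y\not\ge\tfrac{i}{2}\Omega_n$ (the map is then not even positive). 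If $u^{\dagger}\Omega_n u\neq 0$, then $L_v:=\mathrm{span}_{\mathbb{R}}\{\mathrm{Re}\,u,\mathrm{Im}\,u\}$ is a symplectic two-plane; using that $\mathrm{Sp}(2n,\mathbb{R})$ acts transitively on such planes (and is closed under transpose), I would take $\sigma_{n+1}=(S\oplus\mathds{1}_2)\big(\sigma_{\mathrm{tms}}\oplus\tfrac{1}{2}\mathds{1}_{2(n-1)}\big)(S\oplus\mathds{1}_2)^{T}$, where $\sigma_{\mathrm{tms}}$ is a two-mode squeezed vacuum of \emph{any} squeezing $r>0$ between the first system mode and the ancilla, the other $n-1$ system modes are in vacuum, and $S$ carries $L_v$ appropriately onto the first mode's quadrature plane. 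A short computation will show that the two-mode squeezed block contributes exactly $\tfrac{i}{2}\Omega_1$ to $R$ for every $r>0$, so that $R-\tfrac{i}{2}\Omega_n=S\big(\mathbb{0}_1\oplus(\tfrac{1}{2}\mathds{1}_{2(n-1)}-\tfrac{i}{2}\Omega_{n-1})\big)S^{T}\ge 0$ with $L_v$ (hence $u$) in its kernel, completing the contradiction.

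The hard part will be this last construction: a Choi-type witness for an $n$-mode channel naively lives on $2n$ modes, so it is not obvious that a single ancilla mode can detect every failure of (\ref{eq:cp}). The point is that, having fixed the offending direction $v$ first, only the two-dimensional sector $L_v$ of the channel needs to be probed; the delicate steps are the symplectic normalisation of $L_v$, the verification that a two-mode squeezed vacuum of arbitrary squeezing exactly cancels the symplectic ($\Omega$) part of the Schur complement on that sector, and the separate, easy disposal of the degenerate directions with $u^{\dagger}\Omega_n u=0$.
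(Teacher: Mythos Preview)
Your argument is correct, and it shares the backbone of the paper's proof---both reduce the $1$P condition via the Schur complement of the ancilla block and exploit that for a two-mode squeezed vacuum of any $r>0$ the ``effective'' system block is exactly $\tfrac{i}{2}\Omega_1$ (your $R_1$ computation coincides with the paper's Eq.~leading to (\ref{eq:prop}) with $k=1$). The route diverges in how the violating direction is matched. The paper proves a separate Lemma: for any Hermitian $H$, some orthogonal symplectic $Q$ makes the smallest eigenvalue of $\tfrac{i}{2}\Omega_{n-m}\oplus\mathds{1}_m+QHQ^T$ coincide with that of $\tfrac{i}{2}\Omega_n+H$, built by iteratively rotating the bad eigenvector two modes at a time; it then compares minimum eigenvalues of the two sides of (\ref{psmallercp}). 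You instead fix the witness vector $v$ first and use transitivity of $\mathrm{Sp}(2n,\mathbb{R})$ on symplectic $2$-planes to align $L_v=\mathrm{span}_{\mathbb{R}}\{\mathrm{Re}\,u,\mathrm{Im}\,u\}$ with the entangled mode, so that $u\in\ker(R-\tfrac{i}{2}\Omega_n)$ by design. This is cleaner and more constructive, avoids the technical Lemma, and does not need the paper's standing assumption that $X$ is invertible (your argument only uses $u=X^{T}v$). Your separate disposal of the isotropic case $u^{\dagger}\Omega_n u=0$ is a genuine bonus: it shows that in those directions the map already fails ordinary positivity ($k=0$), a dichotomy the paper's eigenvalue argument does not isolate. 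Conversely, the paper's Lemma handles all $1\le k\le n$ uniformly in one stroke, whereas your construction is tailored to $k=1$. The remark that $\mathrm{Sp}(2n,\mathbb{R})$ is ``closed under transpose'' is not actually needed: you want $S$ with $S(\mathbb{R}^2\oplus 0)=L_v$, which is exactly transitivity on symplectic $2$-planes; just be careful with the mode ordering so that $S\oplus\mathds{1}_2$ really acts on the $n$ system modes while the ancilla sits in the second slot of $\sigma_{\mathrm{tms}}$.
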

This means that, in the Gaussian scenario (unlike the general finite-dimensional case \cite{notefoot}), one has a very simple hierarchy of $k$-positivity, consisting of only three classes: completely positive (CP, $k=1$), positive (P, $k=0$) and not positive (NP) Gaussian maps. 
We can derive a simple (and, to our knowledge, original) condition to distinguish between the latter two classes, in terms of the pair $(X,Y)$. Noting that for  (\ref{eq:kpos}) to hold it suffices to check its validity on pure Gaussian states, whose covariance matrix can always be written as $\sigma=\frac12 SS^T$ with $S$ a symplectic transformation, we find that a Gaussian map with Gaussian inputs is positive ($k=0$) if and only if
\begin{equation}\label{eq:p}
   \quad \mbox{$\frac{1}{2}$}XSS^TX^T+Y-\mbox{$\frac{i}{2}$}\Omega_n\geq0\,, \quad \forall~S\in \text{Sp}(2n,\mathbb{R})~.
\end{equation}
Conditions (\ref{eq:cp}) and (\ref{eq:p}) allow one to fully classify the positivity properties of any $(n \rightarrow n)$-mode Gaussian map described by the pair $(X,Y)$ acting on Gaussian inputs. The conditions can be easily generalized to $(n \rightarrow m)$-mode Gaussian maps.

\paragraph*{Hierarchy of Gaussian non-Markovianity.}
Gaussian processes which are continuous in time are represented by a pair of time-dependent matrices $(X_t,Y_t)$ acting as in (\ref{xsigy}). Since we are interested in the divisibility properties of these maps, we can follow the approach of \cite{torre2015} and study the positivity of
the intermediate map $(X_\tau(t),Y_\tau(t))$ acting on the evolving system between times $t$ and $t+\tau$ and affecting the covariance matrix as usual, $\sigma(t)\rightarrow\sigma(t+\tau)=X_\tau(t)\sigma(t) X_\tau^T(t)+Y_\tau(t)$, with \cite{torre2015}
\begin{equation}
  X_\tau(t)=X_{t+\tau}X^{-1}_t~,\label{eq:intermediateXY} \quad
  Y_\tau(t)=Y_{t+\tau}-X_\tau(t)Y_tX_\tau^T(t)~.
\end{equation}

Our aim is now to provide a complete (non-)\!\!~Markovianity hierarchy of  Gaussian maps.
We first recall that imposing complete positivity of the intermediate map for all $t,\tau>0$, one obtains the condition for a Markovian evolution as  in \cite{torre2015},
\begin{equation}
  \mbox{$\frac{i}{2}$}X_\tau(t)\Omega X_\tau^T(t)+Y_\tau(t)-\mbox{$\frac{i}{2}$}\Omega\geq  0~.\label{cpco}
\end{equation}
Any Gaussian map not complying with condition (\ref{cpco}) at some intermediate times is non-Markovian \cite{torre2015}.
We can now add an extra layer to such a dichotomic characterization. Namely, if not all intermediate maps are CP, i.e.~if for some times condition (\ref{cpco}) is violated, but the positivity condition
\begin{equation}
  \mbox{$\frac{1}{2}$}X_\tau(t)SS^TX_\tau^T(t)+Y_\tau(t)- \mbox{$\frac{i}{2}$}\Omega\geq  0\,, \quad \forall~S\in \text{Sp}(2n,\mathbb{R})~.\label{pco}
\end{equation}
holds for all the maps, then the evolution is said to be \emph{weakly} non-Markovian. Finally, if there is at least one intermediate map violating  (\ref{pco}), the process is \emph{strongly} non-Markovian.

It is worth noting that, to check  CP- (respectively~P-) divisibility of the map $(X_t,Y_t)$, it suffices to verify that inequality (\ref{cpco}) [resp.~(\ref{pco})] holds in the limit of small $\tau$, since the composition of an arbitrary number  of intermediate CP (P) maps is CP (P).


\paragraph*{Complete classification of one-mode Gaussian maps.}
In what follows, we focus on one-mode quantum Gaussian processes. From the global map $(X_t,Y_t)$, we can construct, thanks to (\ref{eq:intermediateXY}), the intermediate maps given by the pairs of $2\times2$ matrices $(X_\tau(t),Y_\tau(t))$. In the limit of small $\tau$, $X_\tau(t)$ and $Y_\tau(t)$ are close to the identity and to the null matrix respectively. Expanding these matrices up to first order in $\tau$ we get
\begin{equation}
  X_\tau(t)=\left(1+\epsilon_t\tau\right)~\mathds{1}+\tau~{\cal X}(t)+o(\tau^2), \quad   Y_\tau(t) = \tau~{\cal Y}(t)+o(\tau^2),\label{Xseries}
\end{equation}
where ${\cal X}(t)$ and ${\cal Y}(t)$ are  arbitrary real matrices, with  ${\cal Y}(t)$ being symmetric.
The following two Theorems (proofs in the Appendix \cite{epaps}) then completely characterize the degree of Gaussian (non-)\!\!~Markovianity of any one-mode Gaussian map given by $(X_t,Y_t)$, in terms of the three real parameters
    \begin{eqnarray}
    \epsilon_t&\equiv& \mbox{$\frac{{d}~}{{d}t}$}\ln\left(\sqrt{|\det X_t|}\right)~,\label{eq:eps}\\
    \delta_t&\equiv&\left(\det X_t\right)^2\det \left( \mbox{$\frac{{d}~}{{d}t}$}\left(X^{-1}_tY_tX^{-T}_t\right)\right)~,\label{eq:del}\\
    \kappa_t&\equiv&\mbox{$\frac{{d}~}{{d}t}$}\mbox{tr}~Y_t- 2~\mbox{tr}\left(Y_t\mbox{$\frac{{d}~}{{d}t}$}\ln|X_t|\right)~.\label{eq:trace}
  \end{eqnarray}

\begin{theorem}\label{propom}
  A one-mode Gaussian process given by $(X_t,Y_t)$ is CP-divisible if, for all $t>0$,  it holds:
 $\delta_t \geq \epsilon_t^2$ and  $\kappa_t\geq 0$.
\end{theorem}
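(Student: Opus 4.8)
The plan is to specialize the CP-divisibility condition \eqref{cpco} to the one-mode case, insert the small-$\tau$ expansion \eqref{Xseries}, and extract the leading-order constraint, which should reduce to a single $2\times 2$ matrix inequality whose defining invariants are $\epsilon_t$, $\delta_t$, $\kappa_t$. First I would substitute $X_\tau(t)=(1+\epsilon_t\tau)\mathds{1}+\tau\,\mathcal{X}(t)$ and $Y_\tau(t)=\tau\,\mathcal{Y}(t)$ into the Hermitian matrix $M_\tau(t):=\tfrac{i}{2}X_\tau(t)\Omega X_\tau^T(t)+Y_\tau(t)-\tfrac{i}{2}\Omega$. The zeroth-order term cancels by construction, so $M_\tau(t)=\tau\,\mathcal{M}(t)+o(\tau^2)$ with $\mathcal{M}(t)=\tfrac{i}{2}\bigl(\mathcal{X}\Omega+2\epsilon_t\Omega+\Omega\mathcal{X}^T\bigr)+\mathcal{Y}$; since a composition of intermediate CP maps is CP (as noted after \eqref{pco}), CP-divisibility is equivalent to $\mathcal{M}(t)\geq 0$ for all $t$. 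The remaining work is to translate $\mathcal{M}(t)\geq 0$ — a $2\times 2$ Hermitian positivity condition — into the two stated scalar inequalities.

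The key observation is that a $2\times 2$ Hermitian matrix is positive semidefinite iff its trace and determinant are both nonnegative. I would therefore compute $\operatorname{tr}\mathcal{M}(t)$ and $\det\mathcal{M}(t)$ and identify them with (positive multiples of) the quantities in \eqref{eq:eps}--\eqref{eq:trace}. For the trace: the antisymmetric part $\tfrac{i}{2}(\mathcal{X}\Omega+\Omega\mathcal{X}^T)$ together with the $\tfrac{i}{2}\cdot 2\epsilon_t\,\Omega\Omega$-type piece must be repackaged; using $\Omega^T=-\Omega$ and $\Omega^2=-\mathds{1}$, one checks that the first-order expansion of $Y_\tau(t)=Y_{t+\tau}-X_\tau(t)Y_tX_\tau^T(t)$ gives $\mathcal{Y}(t)=\dot{Y}_t-2\epsilon_t Y_t-(\mathcal{X}Y_t+Y_t\mathcal{X}^T)$, and taking the trace against the definition $\epsilon_t=\tfrac{d}{dt}\ln\sqrt{|\det X_t|}$ and $\tfrac{d}{dt}\ln|X_t|$ yields $\operatorname{tr}\mathcal{M}(t)=\kappa_t$ after the $\mathcal{X}$-dependent cross terms recombine into $2\operatorname{tr}(Y_t\tfrac{d}{dt}\ln|X_t|)$. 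For the determinant one instead works with the conjugated map: writing everything in terms of $X_t^{-1}Y_tX_t^{-T}$ (whose $t$-derivative appears in $\delta_t$) makes the $\mathcal{X}$-dependence drop out of $\det\mathcal{M}$, leaving $\det\mathcal{M}(t)\propto \delta_t-\epsilon_t^2$, with the proportionality constant $(\det X_t)^{-2}>0$ (or its reciprocal) absorbed into the definition \eqref{eq:del}. Hence $\mathcal{M}(t)\geq 0 \iff \delta_t\geq\epsilon_t^2$ and $\kappa_t\geq 0$, which is the claim.

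I expect the determinant computation to be the main obstacle: one must verify carefully that all terms involving the "free" matrix $\mathcal{X}(t)$ genuinely cancel in $\det\mathcal{M}(t)$, so that the criterion depends only on the intrinsic scalars. The clean way to see this is to note that replacing the intermediate map $(X_\tau(t),Y_\tau(t))$ by a symplectically conjugated one — equivalently absorbing the $o(\tau)$ symplectic part $\mathds{1}+\tau\mathcal{X}$ into a local change of basis — leaves both the CP condition \eqref{cpco} and the three invariants \eqref{eq:eps}--\eqref{eq:trace} unchanged; one should state this invariance explicitly and use it to set the "gauge" $\mathcal{X}(t)=0$, after which $\mathcal{M}(t)=\tfrac{i}{2}\cdot 2\epsilon_t\Omega^2+\mathcal{Y}(t)=-\epsilon_t\mathds{1}+\mathcal{Y}(t)$ wait — more precisely $\mathcal{M}(t)$ becomes the manifestly simple form whose trace is $\kappa_t$ and whose determinant is $\delta_t-\epsilon_t^2$. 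A secondary point to handle with care is the direction of the implication: the theorem as stated is "if", so strictly only the sufficiency $\{\delta_t\geq\epsilon_t^2,\ \kappa_t\geq 0\ \forall t\}\Rightarrow$ CP-divisible is needed, and this follows because the small-$\tau$ inequality $\mathcal{M}(t)\geq 0$ integrates up to full \eqref{cpco} via the composition property; the converse (necessity) would additionally require that \eqref{cpco} for finite $\tau$ forces the leading-order inequality, which is immediate by dividing by $\tau$ and letting $\tau\to 0^+$.
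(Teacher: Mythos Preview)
Your strategy is correct and lands on the same first-order matrix inequality $\mathcal{M}(t)\geq 0$ as the paper, but the execution diverges at the step you flag as the ``main obstacle.'' The paper eliminates $\mathcal{X}$ in one line via the $2\times 2$ identity $X\Omega X^T=(\det X)\,\Omega$, which at first order is exactly $\mathcal{X}\Omega+\Omega\mathcal{X}^T=(\operatorname{tr}\mathcal{X})\,\Omega=0$ (tracelessness of $\mathcal{X}$ is forced once $\epsilon_t$ is defined by~\eqref{eq:eps}). So there is no gauge to fix: $\mathcal{M}(t)=\mathcal{Y}(t)+i\epsilon_t\Omega$ automatically, and your proposed symplectic-conjugation argument is both unnecessary and slightly unsafe, since $\kappa_t=\operatorname{tr}\mathcal{Y}$ is \emph{not} invariant under a general (non-orthogonal) symplectic change of basis.

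From that point the two arguments differ in style. The paper brings $\mathcal{Y}$ to standard form $\mu_t\mathds{1}$ or $\pm\mu_t\sigma_z$ by a symplectic diagonalization, splits into the cases $\delta_t\gtrless 0$, and reads off the eigenvalues of $\mu_t\mathds{1}+i\epsilon_t\Omega$ to obtain $\mu_t\geq|\epsilon_t|$. Your trace/determinant criterion is more direct: $\operatorname{tr}\mathcal{M}=\operatorname{tr}\mathcal{Y}=\kappa_t$ and $\det(\mathcal{Y}+i\epsilon_t\Omega)=\det\mathcal{Y}-\epsilon_t^2=\delta_t-\epsilon_t^2$, so $\mathcal{M}\geq 0\iff\kappa_t\geq 0$ and $\delta_t\geq\epsilon_t^2$, with no case analysis. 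This is a genuine (minor) simplification over the paper's route; what you lose is the explicit appearance of the parameter $\mu_t$, which the paper needs anyway for the diagram in Fig.~\ref{fi2} and for Theorem~\ref{propok}.
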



\begin{theorem}\label{propok}
  A one-mode Gaussian process given by $(X_t,Y_t)$ is divisible into positive intermediate maps (P-divisible) if, for all $t>0$, it holds:
   $ \delta_t \geq \mbox{$\frac14$} {(|\epsilon_t|-\epsilon_t)^2}$ and $\kappa_t \geq 0$.
\end{theorem}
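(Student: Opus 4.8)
The plan is to reduce P-divisibility to an infinitesimal test at each time $t$, turn the positivity condition (\ref{pco}) for the intermediate map into a one-parameter family of scalar inequalities by using that (\ref{pco}) need only be verified on pure inputs, solve that elementary optimisation, and translate the outcome into the parameters $\epsilon_t,\delta_t,\kappa_t$. By the remark preceding the one-mode classification, a composition of positive maps is positive, so it suffices to show that, for every $t$, the intermediate map $(X_\tau(t),Y_\tau(t))$ is positive for all small enough $\tau>0$. Inserting the expansion (\ref{Xseries}) and writing $\mathcal{X}'(t):=\epsilon_t\mathds{1}+\mathcal{X}(t)=\dot{X}_tX_t^{-1}$ (so $\mathrm{tr}\,\mathcal{X}'(t)=\tfrac{d}{dt}\ln|\det X_t|=2\epsilon_t$ and hence $\mathrm{tr}\,\mathcal{X}(t)=0$), and testing (\ref{pco}) only on pure $\sigma=\tfrac12 SS^T$, I would rewrite the requirement as $\sigma-\tfrac{i}{2}\Omega+\tau(\mathcal{X}'\sigma+\sigma(\mathcal{X}')^{T}+\mathcal{Y}(t))+o(\tau)\geq 0$. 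For pure $\sigma$ the matrix $\sigma-\tfrac{i}{2}\Omega\geq 0$ has a one-dimensional kernel $\mathbb{C}v_\sigma$, so to leading order this is equivalent to $v_\sigma^\dagger(\mathcal{X}'\sigma+\sigma(\mathcal{X}')^{T}+\mathcal{Y})v_\sigma\geq 0$ for all pure $\sigma$.

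Next I would collapse this onto a single scalar. Using $\sigma v_\sigma=\tfrac{i}{2}\Omega v_\sigma$ and $\mathcal{X}'\Omega+\Omega(\mathcal{X}')^{T}=(\mathrm{tr}\,\mathcal{X}')\Omega=2\epsilon_t\Omega$, one gets $v_\sigma^\dagger(\mathcal{X}'\sigma+\sigma(\mathcal{X}')^{T})v_\sigma=2\epsilon_t\,v_\sigma^\dagger\sigma v_\sigma$, so the test becomes $2\epsilon_t\,v_\sigma^\dagger\sigma v_\sigma+v_\sigma^\dagger\mathcal{Y}(t)v_\sigma\geq 0$. The key observation is that the kernel vector of $\sigma-\tfrac{i}{2}\Omega$ is $v_\sigma=x+i\lambda\,\Omega x$, where $x$ is an eigenvector of $\sigma$ and $\lambda>0$ is twice the corresponding eigenvalue, and that as $\sigma$ runs over all pure states this $(x,\lambda)$ covers all of $(\mathbb{R}^2\setminus\{0\})\times(0,\infty)$. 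Substituting this form and using $\Omega^2=-\mathds{1}$ and $\Omega\mathcal{Y}\Omega=\mathcal{Y}-(\mathrm{tr}\,\mathcal{Y})\mathds{1}$ (valid since $\mathcal{Y}$ is symmetric), the inequality reduces to $(1-\lambda^2)\,x^T\mathcal{Y}(t)x+(\lambda^2\,\mathrm{tr}\,\mathcal{Y}(t)+2\lambda\epsilon_t)|x|^2\geq 0$ for all $x\neq 0$, $\lambda>0$. Normalising $|x|=1$ and letting $y_\pm$ denote the eigenvalues of $\mathcal{Y}(t)$, the extremal direction makes $x^T\mathcal{Y}x=y_-$ for $\lambda<1$ and $=y_+$ for $\lambda>1$; the two regimes map to one another under $\lambda\mapsto 1/\lambda$, so the whole family is equivalent to the single inequality $y_+\lambda^2+2\epsilon_t\lambda+y_-\geq 0$ on $\lambda\in(0,1]$.

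Then I would solve this elementary optimisation: $\lambda\to 0^+$ forces $y_-\geq 0$, hence $\mathcal{Y}(t)\geq 0$; given that, the quadratic is automatically nonnegative when $\epsilon_t\geq 0$, whereas for $\epsilon_t<0$ its minimum over $(0,1]$ is nonnegative iff $y_+y_-\geq\epsilon_t^2$ (the alternative, that the vertex $-\epsilon_t/y_+$ exceeds $1$, gives at $\lambda=1$ an inequality incompatible with $\epsilon_t<0$). So the infinitesimal test holds for all pure $\sigma$ iff $\mathcal{Y}(t)\geq 0$ and $\det\mathcal{Y}(t)\geq\tfrac14(|\epsilon_t|-\epsilon_t)^2$. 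It then remains to identify the parameters: from (\ref{eq:intermediateXY}) one finds $\mathcal{Y}(t)=X_t\tfrac{d}{dt}(X_t^{-1}Y_tX_t^{-T})X_t^{T}$, so $\det\mathcal{Y}(t)=(\det X_t)^2\det(\tfrac{d}{dt}(X_t^{-1}Y_tX_t^{-T}))=\delta_t$ by (\ref{eq:del}), and a short trace computation yields $\mathrm{tr}\,\mathcal{Y}(t)=\tfrac{d}{dt}\mathrm{tr}\,Y_t-2\,\mathrm{tr}(Y_t\,\dot{X}_tX_t^{-1})=\kappa_t$, cf.\ (\ref{eq:trace}); since $\delta_t\geq\tfrac14(|\epsilon_t|-\epsilon_t)^2\geq 0$, the condition $\det\mathcal{Y}(t)\geq 0$ is absorbed and $\mathcal{Y}(t)\geq 0$ reduces to $\kappa_t\geq 0$ — exactly the stated pair of inequalities. (The same argument applied to (\ref{cpco}) instead of (\ref{pco})—where $X_\tau\Omega X_\tau^{T}=(\det X_\tau)\Omega$ removes the need to optimise over pure states—reproduces Theorem \ref{propom}; the two criteria differ only when $\epsilon_t>0$, i.e.\ in the amplifying regime.)

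The main obstacle is not the algebra but closing the gap between this leading-order criterion and honest positivity of the finite-$\tau$ intermediate maps. Necessity of the conditions is immediate by Taylor expansion, and when they hold strictly the perturbed smallest eigenvalue of $\sigma-\tfrac{i}{2}\Omega+\tau(\mathcal{X}'\sigma+\sigma(\mathcal{X}')^{T}+\mathcal{Y})$ is positive uniformly over pure $\sigma$ on compact time intervals, so all the infinitesimal maps are positive and compose to a P-divisible evolution. On the boundary of the admissible region, however, $v_\sigma^\dagger(\mathcal{X}'\sigma+\sigma(\mathcal{X}')^{T}+\mathcal{Y})v_\sigma$ vanishes for some pure $\sigma$, the first-order test is inconclusive, and whether positivity survives depends on second-order data not captured by (\ref{Xseries})—which is why the statement is (correctly) given as a sufficient condition rather than an equivalence.
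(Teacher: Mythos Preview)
Your proof is correct and reaches the same conditions as the paper, but by a genuinely different route. The paper first exploits $X_\tau\Omega X_\tau^{T}=(\det X_\tau)\Omega$ to replace $X_\tau$ by a scalar, then brings $Y_\tau$ to symplectic normal form (proportional to $\mathds{1}$ when $\delta_t\geq 0$, to $\sigma_z$ when $\delta_t<0$), and finally uses the Euler decomposition $S=O_1Z_zO_2$ to reduce (\ref{pco}) to an explicit $2\times 2$ eigenvalue computation in the squeezing parameter $z\in(0,1]$; the two signs of $\delta_t$ are treated as separate cases. You instead work perturbatively: you isolate the unique direction $v_\sigma$ along which $\sigma-\tfrac{i}{2}\Omega$ vanishes, show via the $2\times 2$ identity $M\Omega+\Omega M^{T}=(\mathrm{tr}\,M)\Omega$ that only $\epsilon_t=\tfrac12\mathrm{tr}\,\mathcal{X}'$ survives from $X_\tau$, and parametrise $v_\sigma$ by $(x,\lambda)$ to land on the quadratic $y_+\lambda^2+2\epsilon_t\lambda+y_-\geq 0$ in the eigenvalues of $\mathcal{Y}(t)$, never needing a normal form for $\mathcal{Y}$. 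This handles $\delta_t<0$ automatically (it forces $y_-<0$), makes the role of $\kappa_t=\mathrm{tr}\,\mathcal{Y}$ as the $\lambda\to 0^+$ obstruction transparent, and the kernel-perturbation viewpoint is the one more likely to generalise beyond one mode; the paper's symplectic-diagonalisation argument is shorter once the normal-form reduction is granted. Your closing remark on the boundary case, where the first-order term vanishes and higher-order data in (\ref{Xseries}) would be needed, is a point the paper's proof does not raise --- it too works only to leading order in $\tau$.
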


The Gaussian processes for which Theorem~\ref{propom} is satisfied are Markovian. Those for which Theorem~\ref{propok} is satisfied while Theorem~\ref{propom} is not are weakly non-Markovian. Those for which Theorem~\ref{propok} is not satisfied are strongly non-Markovian. 



Let us now define
\begin{equation}\label{eq:mudef}
 \mu_t \equiv \left\{  \begin{array}{r@{\quad}cr}
\mathrm{sgn}(\kappa_t)\sqrt{\delta_t}~,\quad\mbox{for}~~\delta_t\geq0\\
-\sqrt{|\delta_t|}~,\quad\mbox{for}~~\delta_t<0
\end{array}\right.~.
\end{equation}
Due to Theorems \ref{propom} and \ref{propok}, for a one-mode Gaussian process we can then distinguish  three regions in the space of parameters  $\epsilon$ and $\mu$ as shown  in Fig.~\ref{fi2}, which correspond to the intermediate map being respectively CP, P, and NP:
\begin{equation}\label{eq:regions}\left.
  \begin{array}{llll}
    \Upsilon_{\mbox{\scriptsize CP}}&\equiv&\{(\epsilon,\mu)~|~\mu\geq|\epsilon|\}, & \hbox{} \\
    \Upsilon_{\mbox{\scriptsize P}}&\equiv&\{(\epsilon,\mu)~|~2\mu\geq|\epsilon|-\epsilon\}, & \hbox{} \\
    \Upsilon_{\mbox{\scriptsize NP}}&\equiv&\mathbb{R}^2\backslash\Upsilon_P. & \hbox{}
  \end{array}
\right.
\end{equation}

A similar diagram can be found e.g.~in \cite{Schaefer2013,Giovannetti2014,DePalma2015}. However, the parameters there characterize global quantum channels, so regions analogous to $\Upsilon_{\mbox{\scriptsize P}\backslash \mbox{\scriptsize CP}}\equiv\Upsilon_{\mbox{\scriptsize P}}\backslash\Upsilon_{\mbox{\scriptsize CP}}$ and $\Upsilon_{\mbox{\scriptsize NP}}$ are denoted as non-physical. Here, since the diagram is built for intermediate maps of a globally CP process,  which by themselves do not need to be CP, these regions are permitted. 

We can  in fact fully classify the Gaussian (non-)\!\!~Markovianity degree of any one-mode Gaussian process by studying the paths $\Gamma_t\equiv\{(\epsilon_s,\mu_s)\}_{s=0}^t$ defined by its intermediate maps on the $(\epsilon,\mu)$ diagram of Fig.~\ref{fi2}. If an evolution is Markovian then the trajectory will be confined at all times in the $\Upsilon_{\mbox{\scriptsize CP}}$ region,  $\Gamma_t~\in~\Upsilon_{\mbox{\scriptsize CP}}\quad\forall~t>0$.  If at some times the trajectory trespasses in the $\Upsilon_{\mbox{\scriptsize P}\backslash\mbox{\scriptsize CP}}$ region but never trespasses in the $\Upsilon_{\mbox{\scriptsize NP}}$ one, i.e.~if $\Gamma_t~\in~\Upsilon_{\mbox{\scriptsize P}}\quad\forall~t>0$ and $\exists~s : (\epsilon_s, \mu_s) \not\in \Upsilon_{\mbox{\scriptsize CP}}$, then the evolution is weakly non-Markovian. If at some times the trajectory crosses into the $\Upsilon_{\mbox{\scriptsize NP}}$ region, i.e.~$\exists~s : (\epsilon_s, \mu_s) \not\in \Upsilon_{\mbox{\scriptsize P}}$, then the evolution is strongly non-Markovian.

\paragraph*{Phase-insensitive maps: Allowed trajectories and examples.}
We now analyze in more detail the physical constraints imposed on processes described by  CP maps $(X_t,Y_t)$, transforming Gaussian states from an initial $t=0$ to a later time $t$.  
For ease of illustration, we will focus on the special case of phase-insensitive maps, which encompass the most physically relevant bosonic processes, such as quantum Brownian motion and amplitude damping \cite{Petruccione,schlosshauer,Vasile2011,Eisert2015,guarni2016,Souza2015,torre2015}. These have intermediate maps of the form $X_\tau(t) = \left(1+\epsilon_t\tau\right)~\mathds{1}$,   $Y_\tau(t)=\mu_t~\tau~\mathds{1}$, with $\mu_t\equiv\mbox{sgn}(\kappa_t)\sqrt{\delta_t}$, obtained by setting  ${\cal X}(t)=\mathbb{0}$ and ${\cal Y}(t) = \mu_t \mathds{1}$ in  (\ref{Xseries}).
Applying the composition law for Gaussian maps, 
it is easy to show that the global map from  $t=0$ to  $t=N\tau$, such that $\lim_{N\to\infty}\lim_{\tau\to0}N\tau=t>0$, is:
\begin{equation}
  X_t = e^{\int_0^t\epsilon_s ds}~\mathds{1}, \quad  Y_t=\mbox{$\Big(e^{2\int_0^t\epsilon_s ds}\int_0^t\mu_r e^{-2\int_0^r\epsilon_s ds}dr\Big) \mathds{1}$}\,.\label{XYt}
\end{equation}

\begin{figure}
\center
\vspace*{-.3cm}
\includegraphics[width=7.5cm]{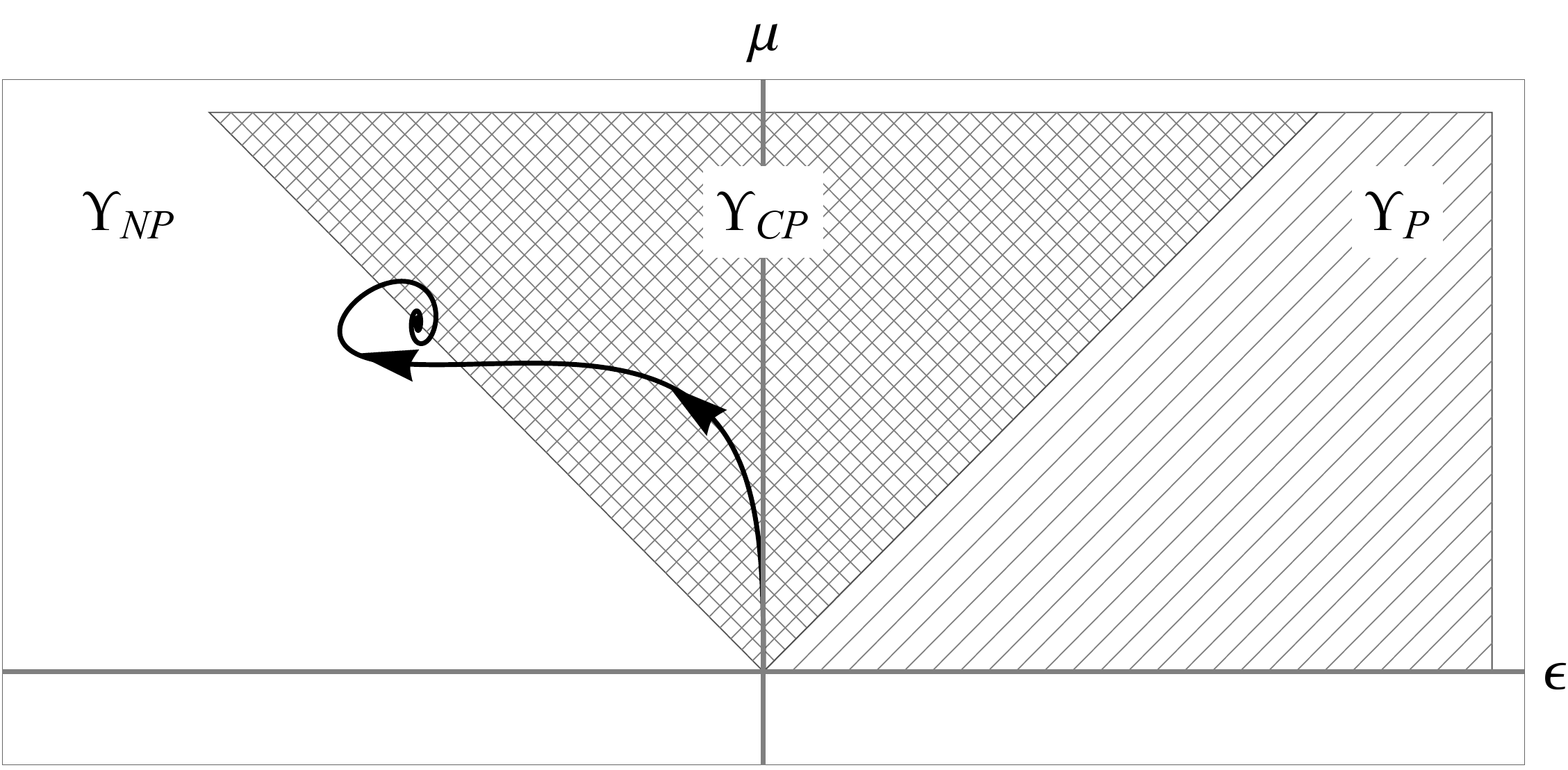}
\caption{Pictorial diagram of parameters $(\epsilon,\mu)$ characterizing one-mode Gaussian intermediate maps. The diagonal striped pattern corresponds to the P- but not CP-divisible region $\Upsilon_{\mbox{\scriptsize P}\backslash\mbox{\scriptsize CP}}$. The crosshatch pattern identifies the CP-divisible region $\Upsilon_{\mbox{\scriptsize CP}}$. The white region corresponds to $\Upsilon_{\mbox{\scriptsize NP}}$. A path on the diagram denotes a process with parameters changed continuously in time. The solid black path represents a quantum Brownian motion process as described in the text.
}
\label{fi2}
\end{figure}

A paradigmatic and widely studied example (see e.g.~\cite{Petruccione,schlosshauer} and references therein) is  quantum Brownian motion. With a secular and weak-coupling approximation, the master equation is given by: $
\dot{\rho}_t=\frac{\Delta_t+\gamma_t}{2}[2\hat a\rho_t \hat a^\dagger-\{\hat a^\dagger \hat a,\rho_t\}_+]+\frac{\Delta_t-\gamma_t}{2}[2\hat a^\dagger\rho_t \hat a-\{\hat a \hat a^\dagger,\rho_t\}_+]$, where $\hat{a}, \hat{a}^\dagger$ are the ladder operators satisfying $[\hat{a},\hat{a}^\dagger]=1$, while
$\Delta_t$ and $\gamma_t$ are respectively the diffusion and damping coefficients, which depend on the spectral density of the bath.
The evolved covariance matrix of a one-mode Gaussian state undergoing this dynamics is:
 $ \sigma(t) = \mbox{$\big(e^{-\int_0^t\gamma_s ds}\mathds{1}\big)\,\sigma(0)\,\big(e^{-\int_0^t\gamma_s ds}\mathds{1}\big)$}
  + \mbox{$e^{-2\int_0^t\gamma_s ds}\int_0^te^{2\int_0^s\gamma_r dr}\Delta_s ds~\mathds{1}$}$,
which corresponds to the map $(X_t,Y_t)$ given by (\ref{XYt}) with the substitutions $\epsilon_t\rightarrow-\gamma_t$,  $\mu_t\rightarrow\Delta_t$. A trajectory on the $(\epsilon,\mu)$ plane, for a system with characteristic frequency $\omega_0$ and a zero-temperature bath with Ohmic spectral density $J(\omega)=\omega e^{-\omega/\omega_c}$ and cut-off frequency $\omega_c=\omega_0/2$, is depicted in Fig.~\ref{fi2}.

More generally, to have a physical evolution from a composition of infinitesimal phase-insensitive maps, we must impose the CP condition (\ref{eq:cp}) on the global map $(X_t,Y_t)$ given by (\ref{XYt}). The eigenvalues of the lhs of (\ref{eq:cp})  are in this case:
  $\Lambda_\pm=\mbox{$\pm\mbox{$\frac{1}{2}$}+e^{2\int_0^t\epsilon_s ds}\left(\mp\mbox{$\frac{1}{2}$}+\int_0^te^{-2\int_0^r\epsilon_s ds}\mu_r dr\right)$}$.
The conditions $\Lambda_\pm\geq0$ can be rewritten as (see Appendix \cite{epaps})
\begin{equation}
\mbox{$  \int_0^te^{-2\int_0^s\epsilon_s ds}\left(\mu_r\pm\epsilon_r\right)dr\geq0,~\quad\forall~t>0$}~.\label{Lambda222}
\end{equation}
As expected, these conditions are weaker than the condition for CP-divisibility, allowing the trajectories in the diagram of Fig.~\ref{fi2} to go beyond the region $\Upsilon_{\mbox{\scriptsize CP}}$. However, the following constraint on the physical paths can be derived.
By expanding the lhs of  inequalities (\ref{Lambda222}) at first order in $t$ we get $ \mu_0\geq|\epsilon_0|$,
that is, the trajectory must begin in the CP region $\Upsilon_{\mbox{\scriptsize CP}}$. Moreover, if it starts on the boundaries of  $\Upsilon_{CP}$, i.e.,~$\mu_0=|\epsilon_0|$, then  $\dot{\mu}_0\geq|\dot{\epsilon}_0|$. This tells us that not only the trajectory must start in the CP-divisibility region, but it has to have an initial ``speed'' such that it will remain in there for the immediate subsequent time. A path which starts in the origin, then moves along the boundary of the crosshatched region up to a time $t_i$ and then trespasses in either region $\Upsilon_{\mbox{\scriptsize NP}}$ or $\Upsilon_{\mbox{\scriptsize P}\backslash\mbox{\scriptsize CP}}$, is not allowed.

\paragraph*{Operational significance of Gaussian non-Markovianity degrees.}
The  significance of the last no-go rule is related to fundamental physical properties.
Suppose indeed that at time $t=0$ an initial state is described by a thermal covariance matrix $\sigma=\mbox{diag}\{\nu,\nu\}$, with $\nu \geq \frac12$. Under the action of the map (\ref{XYt})  
at time $\tilde{t}>0$, the product of the canonical variances is
 $ \mbox{$\langle\hat{q}^2\rangle\langle\hat{p}^2\rangle=e^{4 \int_0^{\tilde{t}} \epsilon_s \, ds} \Big(\nu+  \int_0^{\tilde{t}} \mu_r e^{-2 \int_0^r \epsilon_s \,
   ds} \, dr\Big)^2$}$.
If $\Lambda_-< 0$, we obtain $\langle\hat{q}^2\rangle\langle\hat{p}^2\rangle <
e^{2 \int_0^{\tilde{t}} \epsilon_s \, ds} \Big(e^{ \int_0^{\tilde{t}} \epsilon_s \, ds}\nu-\sinh\big(\int_0^{\tilde{t}} \epsilon_s \,
   ds\big) \Big)^2$, which for a pure initial state (i.e.~the vacuum or a Glauber coherent state, with $\nu=\frac12)$ reduces to
 $ \langle\hat{q}^2\rangle\langle\hat{p}^2\rangle<\mbox{$\frac{1}{4}$}$,
i.e.~to a violation of the uncertainty principle, which is not physically admitted. Indeed, a trajectory lying along the border between $\Upsilon_{CP}$ and $\Upsilon_{NP}$ (representing, e.g., a damping master equation with generally time-dependent damping constant) preserves the purity of such a state. To better understand this, let us consider the limiting case of having a map such that $-\epsilon_t=\mu_t>0$ for $0<t<t_i$ and  $-\epsilon_t>\mu_t$ for $t_i<t<\tilde{t}$. Up to $t_i$, the $X_t$ part of the map decreases both variances of the pure input state, while the noise added by the $Y_t$ part compensates the loss and the state remains pure. Then, for $t>t_i$, the noise introduced by $Y_t$ is not enough and the uncertainty relation is violated.

However, crossing the border during the evolution would be possible if the preceding dynamics shrank the state domain of the intermediate map such that its subsequent action, corresponding to a temporary dilation of this domain, would not violate the uncertainty relation. The non-Markovian effect, manifested in the dilation of the volume of the physical states accessible during the dynamics, can then be seen as a backflow of information from the environment into the system \cite{lorenzo2013}.

Let us now comment on the other border of the CP region, between $\Upsilon_{\mbox{\scriptsize CP}}$ and $\Upsilon_{\mbox{\scriptsize P}\backslash\mbox{\scriptsize CP}}$. For any dynamics with added noise ($\mu_t  > 0$), a trajectory along this border is such that $\epsilon_t=\mu_t>0$, which is responsible for an {\em amplification}, that is, the multiplication of the displacement vector $D$ by a factor greater than $1$ and a corresponding increase of the variances. Along such a path, the noise added is the minimum allowed for quantum linear amplifiers \cite{Clerk2010}. Crossing this border into the $\Upsilon_{\mbox{\scriptsize P}\backslash\mbox{\scriptsize CP}}$ region at a time $t_i>0$ is allowed only if the noise added up to that time is sufficient to permit a subsequent amplification beyond the quantum limit. This is possible thanks to correlations established between system and environment during the preceding evolution. We can conclude therefore that a Gaussian phase-insensitive process (with added noise) is weakly non-Markovian if at any moment in time one observes that, although the covariances increase, a Gaussian state evolving under such a process is amplified beating the quantum limit. This provides an {\it operational interpretation} for the elusive phenomenon of weak non-Markovianity in the context of quantum amplification. 

\paragraph*{Conclusions.} This Letter introduced a meaningful hierarchy of non-Markovianity for CV Gaussian processes and established its physical significance. We provided a necessary and sufficient condition for positivity of a Gaussian map acting on Gaussian inputs. Applying this to intermediate maps, we then distinguished three main types of Gaussian processes: Markovian, weakly and strongly non-Markovian ones. 

In the one-mode case, we gave a simple prescription to identify to which class a Gaussian map belongs, based on its representation as a path $(\epsilon_t,\mu_t)$ in a two-dimensional diagram, where $\epsilon_t$ and $\mu_t$ can be computed explicitly from the pair of matrices $(X_t,Y_t)$ describing the action of the map. We also studied, in the physically relevant case of phase-insensitive channels, the constraints on these paths due to the requirement of having a global CP map. This allowed us to give a physical interpretation to weakly and strongly non-Markovian processes in terms of amplification beyond the quantum limit and of information backflow from the environment, respectively.

These findings can be of importance for quantum cryptography \cite{Vasile2011}. An eavesdropper with access to knowledge whether a given communication channel is weakly or strongly non-Markovian can amplify a state in such a way that the legitimate parties may find it too noisy to be useful, discarding it. Moreover, if the legitimate parties do not fully control the way the shared state is prepared, unexpected behaviour can be observed if possible non-Markovian effects are ignored.


We finally note that in all the Gaussian processes we considered explicitly (e.g.~quantum Brownian model and damping model), we found either instances of Markovian or strongly non-Markovian evolutions, but no weakly non-Markovian ones. This may be due to the fact that all these processes admit a final state at thermal equilibrium with the environment. Some purely weak non-Markovian processes might be retrieved in case an evolution in an active environment that pumps energy into the system is analyzed. Investigating memory effects in such processes deserves further investigation.


\begin{acknowledgments}
\paragraph*{Acknowledgments.} We thank Sabrina Maniscalco,  Dariusz Chru\'{s}ci\'{n}ski, John Jeffers, Marco Piani, Gianpaolo Torre, and Fabrizio Illuminati for discussions. This work was supported by the UK EPSRC Quantum Imaging Hub (Grant No.~EP/M01326X/1), the European Research Council (ERC) Starting Grant GQCOP (Grant No.~637352), the Foundational Questions Institute (fqxi.org)  Physics of the Observer Programme (Grant No.~FQXi-RFP-1601),  the Brazilian Agencies CAPES (Grant No.~6842/2014-03) and CNPq (Grant No.~470131/2013-6), and the University of Nottingham (Graduate School Travel Prize 2015).
\end{acknowledgments}


%

\clearpage
\appendix
\widetext
\begin{center}
\textbf{\large Supplemental Material}
\end{center}
\setcounter{page}{1}
\setcounter{equation}{0}
\setcounter{figure}{0}
\setcounter{table}{0}
\setcounter{section}{0}

\section{Proof of Theorem~\ref{thm1}}

In order to prove Theorem~\ref{thm1} it is useful to prove the following Lemma first.

\begin{lemma}\label{lem}
For any $2n\times 2n$ Hermitian matrix $H$ the smallest eigenvalue of $\frac{1}{2}~i\Omega_n+H$ is equal to the smallest eigenvalue of
\begin{equation}
\mbox{$\frac{1}{2}$}~i\Omega_{n-m}\oplus\mathds{1}_m +QHQ^T,
\label{withone}
\end{equation}
for some orthogonal symplectic matrix $Q$ and for any $m<n$.
\end{lemma}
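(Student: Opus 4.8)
\smallskip
\noindent\emph{Proof plan.} The argument rests on two elementary facts plus one geometric construction. First, an orthogonal symplectic $Q$ obeys $QQ^{T}=\mathds{1}$ and $Q\Omega_{n}Q^{T}=\Omega_{n}$, hence $Q\Omega_{n}=\Omega_{n}Q$, so that $\frac{i}{2}\Omega_{n}+QHQ^{T}=Q(\frac{i}{2}\Omega_{n}+H)Q^{T}$ is related to $\frac{i}{2}\Omega_{n}+H$ by a (real orthogonal, hence unitary) similarity and therefore has the same spectrum, for \emph{every} such $Q$. Second,
\[
\Bigl(\tfrac{i}{2}\Omega_{n-m}\oplus\mathds{1}_{m}\Bigr)-\tfrac{i}{2}\Omega_{n}=\mathbb{0}_{n-m}\oplus\Bigl(\mathds{1}_{m}-\tfrac{i}{2}\Omega_{m}\Bigr)\;\geq\;0,
\]
because $\tfrac{i}{2}\Omega_{m}$ has eigenvalues $\pm\tfrac12$, so $\mathds{1}_{m}-\tfrac{i}{2}\Omega_{m}$ has eigenvalues $\tfrac12$ and $\tfrac32$. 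Combining these, for every orthogonal symplectic $Q$,
\[
\lambda_{\min}\!\Bigl(\tfrac{i}{2}\Omega_{n-m}\oplus\mathds{1}_{m}+QHQ^{T}\Bigr)\;\geq\;\lambda_{\min}\!\Bigl(\tfrac{i}{2}\Omega_{n}+QHQ^{T}\Bigr)=\lambda_{\min}\!\Bigl(\tfrac{i}{2}\Omega_{n}+H\Bigr),
\]
which is one of the two inequalities; it remains to realise the opposite one for a suitable $Q$.

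For the reverse bound I would use a witness vector. Let $v$ be a unit eigenvector of $\frac{i}{2}\Omega_{n}+H$ for its smallest eigenvalue $\lambda$. If $Q$ can be chosen so that $Qv$ is supported on the first $n-m$ modes, i.e.\ $Qv\in\ker\bigl(\mathbb{0}_{n-m}\oplus(\mathds{1}_{m}-\tfrac{i}{2}\Omega_{m})\bigr)$, then $Qv$ is an eigenvector of $\frac{i}{2}\Omega_{n}+QHQ^{T}$ for the eigenvalue $\lambda$ (by the first fact), and the quadratic form of $\frac{i}{2}\Omega_{n-m}\oplus\mathds{1}_{m}+QHQ^{T}$ at the unit vector $Qv$ equals $\lambda$; hence $\lambda_{\min}(\tfrac{i}{2}\Omega_{n-m}\oplus\mathds{1}_{m}+QHQ^{T})\leq\lambda$ and the proof closes. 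Thus everything reduces to: can an orthogonal symplectic transformation carry $v$ into the first $n-m$ modes? Identifying $\mathbb{R}^{2n}\cong\mathbb{C}^{n}$ with $\Omega_{n}$ as complex structure, the orthogonal symplectic group becomes the full unitary group $U(n)$ acting standardly on $\mathbb{C}^{n}$; writing $v=v_{R}+iv_{I}$ and letting $a,b\in\mathbb{C}^{n}$ be the images of $v_{R},v_{I}$, the vector $Qv$ corresponds to $(Ua,Ub)$, so the requirement is $U(\mathrm{span}_{\mathbb{C}}\{a,b\})\subseteq\mathbb{C}^{n-m}$. This is possible precisely when $\dim_{\mathbb{C}}\mathrm{span}_{\mathbb{C}}\{a,b\}\leq n-m$, and since that dimension is at most $2$ a suitable $Q$ exists at once whenever $m\leq n-2$.

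The genuinely delicate case, and what I expect to be the main obstacle, is the boundary $m=n-1$, where one needs $\dim_{\mathbb{C}}\mathrm{span}_{\mathbb{C}}\{a,b\}\leq1$, i.e.\ that $\mathrm{span}_{\mathbb{R}}\{v_{R},v_{I}\}$ be $\Omega_{n}$--invariant. Here I would not treat $H$ as a black box but exploit the rigid structure of $\frac{i}{2}\Omega_{n}+H$: splitting the eigenvalue equation into real and imaginary parts and eliminating $v_{I}$ exhibits $v_{R}$ as a real eigenvector, with a prescribed eigenvalue, of a fixed matrix built from $\Omega_{n}$ and $H-\lambda\mathds{1}$; one then couples this with the freedom to first absorb part of $H$ by an orthogonal symplectic rotation and, if the smallest eigenvalue is degenerate, to pick the most favourable minimal eigenvector. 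Pushing this through is the crux, and it is where the hypothesis $m<n$ is actually used. The remaining bookkeeping is routine: mode permutations are orthogonal symplectic, so the $\mathds{1}_{m}$ block may be placed on any chosen set of $m$ modes, and one may pass freely between $\frac{i}{2}\Omega$ and $-\frac{i}{2}\Omega$ as they are complex conjugates.
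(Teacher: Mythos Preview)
Your overall architecture is sound and in fact matches the paper's: one inequality follows because the difference operator is positive semidefinite, and the reverse one is obtained by exhibiting an orthogonal symplectic $Q$ for which the minimal eigenvector $Qv$ lies in the kernel of that difference. Where your plan goes astray is in the identification of that kernel. The expression in the lemma is to be read as $\tfrac12\bigl(i\Omega_{n-m}\oplus\mathds{1}_m\bigr)$ --- this is how it is used in the paper's own computation and in its application to Theorem~1 --- so the relevant difference is $\mathbb{0}_{n-m}\oplus\tfrac12(\mathds{1}_m-i\Omega_m)$, whose eigenvalues on each of the last $m$ modes are $0$ and $1$, not $\tfrac12$ and $\tfrac32$. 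Its kernel is therefore \emph{not} merely ``supported on the first $n-m$ modes'': on each tail mode it also contains the complex line of vectors $a+ib$ (with $a,b\in\mathbb{R}^{2}$) satisfying $a+\Omega b=0$, i.e.\ the $(+1)$-eigenspace of $i\Omega$.

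Because you took the kernel to be trivial on the last $m$ modes you were forced to demand that $Qv$ vanish there entirely --- four real constraints per mode rather than two --- and this is exactly what leaves your boundary case $m=n-1$ unresolved. With the correct kernel the obstacle disappears, and the paper's construction is simply the explicit way to reach it: writing the last two mode-blocks of $v$ as $a_{n-1}+ib_{n-1}$ and $a_{n}+ib_{n}$, a single two-mode orthogonal symplectic rotation (parametrised by an angle $\phi$ and a rotation $R\in SO(2)$, mixing modes $n-1$ and $n$) can always be chosen so that the transformed last block obeys $a_{n}'+\Omega b_{n}'=0$, since this amounts to $\cos\phi\,(a_{n}+\Omega b_{n})+\sin\phi\,R\,(a_{n-1}+\Omega b_{n-1})=0$, which is solvable for any pair of real $2$-vectors. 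Iterating on successive adjacent pairs yields $Q=Q_{m}\cdots Q_{1}$ and handles every $m<n$ uniformly, with no separate analysis of $m=n-1$; the hypothesis $m<n$ enters only because each step needs one neighbouring mode to borrow from.
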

\begin{proof}Let us denote $\lambda=\min\{{\rm eig}(H+\frac{1}{2}~i\Omega_n)\}$ which corresponds to an eigenvector
\begin{equation}
 v_{\lambda}=\begin{bmatrix}\alpha+i\beta\\  a_{n-1}+i b_{n-1}\\ a_n+i b_n\end{bmatrix},
\end{equation}
where $\alpha$ and $\beta$ are $2n-4$ dimensional real vectors and $a_{n-1},  a_n,  b_{n-1}$ and $ b_n$ are two dimensional real vectors. A transformation $Q\in \text{Sp}(2n,\mathds{R})\cap \text{SO}(2n)$ preserves the eigenvalues changing the corresponding eigenvector into $ v_\lambda'=Q v_\lambda$. In order to prove the Lemma we start showing that there exists $Q_1\in \text{Sp}(2n,\mathds{R})\cap \text{SO}(2n)$ such that $ v_\lambda^{(1)}=Q_1 v_\lambda$ is an eigenvector for both $\frac{1}{2}~i\Omega_n+Q_1HQ_1^T$ and $\frac{1}{2}~i\Omega_{n-1}\oplus\mathds{1}+Q_1HQ_1^T$ for the same eigenvalue $\lambda$. Denote
\begin{equation}
 v^{(1)}_{\lambda}=\begin{bmatrix}\alpha^{(1)}+i\beta^{(1)}\\  a^{(1)}_{n-1}+i b^{(1)}_{n-1}\\  a^{(1)}_n+i b^{(1)}_n\end{bmatrix}.
\end{equation}
Observe the action of $\frac{1}{2}~i\Omega_{n-1}\oplus\mathds{1}+Q_1HQ_1^T$ on $ v^{(1)}_\lambda$
\begin{eqnarray}
&&(Q_1HQ_1^T+\mbox{$\frac{1}{2}$}i\Omega_n+\mbox{$\frac{1}{2}$}\mathbb{0}_{n-1}\oplus(\mathds{1}-i\Omega))\begin{bmatrix}\alpha^{(1)}+i\beta^{(1)}\\ a^{(1)}_{n-1}+i b^{(1)}_{n-1}\\  a^{(1)}_n+i b^{(1)}_n\end{bmatrix}
=\lambda  v^{(1)}_\lambda+\mbox{$\frac{1}{2}$}\begin{bmatrix}\mathbb{0}\\ 0\\  a^{(1)}_n+\Omega b^{(1)}_n-i\Omega(a^{(1)}_n+\Omega b^{(1)}_n)\end{bmatrix}.
\label{abprim}
\end{eqnarray}
Consider the following symplectic orthogonal transformation
\begin{equation}
Q_1=\mathds{1}_{n-2}\oplus\begin{bmatrix}\cos\phi_1\mathds{1}&-\sin\phi_1  R_1\\
 \sin\phi_1  R_1& \cos\phi_1 \mathds{1}\end{bmatrix},
\label{esss}
\end{equation}
where
\begin{equation}
 R_1=\begin{bmatrix}\cos\theta_1 & -\sin\theta_1\\ \sin\theta_1 &\cos\theta_1\end{bmatrix}.
\end{equation}
Using this transformation we have
\begin{equation}
  v^{(1)}_\lambda=\begin{bmatrix}\alpha^{(1)}+i\beta^{(1)}\\  a^{(1)}_{n-1}+i b^{(1)}_{n-1}\\  a^{(1)}_n+i b^{(1)}_n\end{bmatrix}=Q_1\begin{bmatrix}\alpha \\  a_{n-1}\\  a_n\end{bmatrix}+iQ_1\begin{bmatrix}\beta \\ b_{n-1}\\  b_n\end{bmatrix}=\begin{bmatrix}\alpha\\ \cos\phi_1  a_{n-1}-\sin\phi_1 R_1 a_n\\ \cos\phi_1  a_n+\sin\phi_1  R_1 a_{n-1}\end{bmatrix}+i\begin{bmatrix}\beta\\ \cos\phi_1  b_{n-1}-\sin\phi_1 R_1 b_n\\ \cos\phi_1  b_n+\sin\phi_1  R_1 b_{n-1}\end{bmatrix}.
\end{equation}
The term $ a^{(1)}_n+\Omega b^{(1)}_n$ from the last term of (\ref{abprim}) can now be written as
\begin{equation}
 a^{(1)}_n+\Omega b^{(1)}_n=\cos\phi_1( a_n+\Omega b_n)+\sin\phi_1 R_1( a_{n-1}+\Omega b_{n-1}).
\label{eleven}
\end{equation}
Notice that for any two real two dimensional vectors $ r_1$ and $ r_2$ one can always find a rotation $ R_1$ and an angle $\phi_1$ such that $\cos\phi_1 r_1+\sin\phi_1  R_1 r_2=0$. Indeed, the rotation $ R_1$ directs the second vector to be parallel to the first and $\sin\phi_1$ and $\cos\phi_1$ adjust the lengths.
Therefore, we showed that it is possible to find a symplectic orthogonal transformation $Q_1$, i.e. $ R_1$ and $\phi_1$, such that the last term of (\ref{abprim}) vanish, hence that ${\bf v}^{(1)}_\lambda$ is an eigenvector of $\frac{1}{2}~i\Omega_{n-1}\oplus\mathds{1}+Q_1HQ_1^T$ corresponding to the eigenvalue $\lambda$.

Using an analogous argument we show that $\lambda$ is also an eigenvalue of $\frac{1}{2}~i\Omega_{n-2}\oplus\mathds{1}_2+Q_2\tilde{H}Q_2^T$, where $\tilde{H}=Q_1HQ_1^T$, corresponding to the eigenvector $ v^{(2)}_\lambda=Q_2 v^{(1)}_\lambda$, with
\begin{equation}
  Q_2=\mathds{1}_{n-3}\oplus\begin{bmatrix}\cos\phi_2\mathds{1}&-\sin\phi_2  R_2\\
 \sin\phi_2  R_2& \cos\phi_2 \mathds{1}\end{bmatrix}\oplus\mathds{1}~,
\end{equation}
and $\phi_2$, $ R_2$ satisfying
\begin{equation}
\cos\phi_2( a^{(1)}_{n-1}+\Omega b^{(1)}_{n-1})+\sin\phi_2  R_2( a^{(1)}_{n-2}+\Omega b^{(1)}_{n-2})=0~.
\end{equation}
Iterating this procedure we find that $Q=Q_k\cdot Q_{k-1}\dots Q_2\cdot Q_1$. This completes the proof of Lemma~\ref{lem}.
\end{proof}

\begin{proof}[Proof (of Theorem \ref{thm1})]

We want now to deliver a condition on a map $(X,Y)$ acting on an $n$-mode quantum system guaranteeing that the inequality (\ref{eq:kpos}) is satisfied for every $\boldsymbol{\sigma}_{n+k}$ where $1\leq k$. We consider a generic bipartite $n+k$-modes covariance matrix
\begin{equation}
\sigma_{n+k}=\left(\begin{array}{cc}
A & C\\
C^T & B\end{array}\right)
\end{equation}
where $A$ is a $2n\times2n$ symmetric matrix, $B$ is a $2k\times2k$ symmetric matrix and $C$ is a $2n\times2k$ matrix. Inequality (\ref{eq:kpos}) reads
\begin{equation}\label{eq:posA}
\left(\begin{array}{cc}
XAX^T+Y-\mbox{$\frac{i}{2}$}\Omega_n & XC\\
C^TX^T & B-\mbox{$\frac{i}{2}$}\Omega_k\end{array}\right) \geq0 \Leftrightarrow
\left(\begin{array}{cc}
A+X^{-1}(Y-\mbox{$\frac{i}{2}$}\Omega_n)(X^T)^{-1} & C\\
C^T & B-\mbox{$\frac{i}{2}$}\Omega_k\end{array}\right) \geq0,
\end{equation}
where we assume that $X$ is invertible. As $\sigma_{n+k}\geq \frac{i}{2}\Omega_{n+k}$, we also have that $B-\frac{i}{2}\Omega_k\geq0$. Assuming invertibility of $ B-\frac{i}{2}\Omega_k$ the condition (\ref{eq:posA}) is equivalent to positivity of the Schur's complement of the block $B-\mbox{$\frac{i}{2}$}\Omega_k$, i.e.
\begin{equation}\label{eq:posB}
A+X^{-1}(Y-\mbox{$\frac{i}{2}$}\Omega_n)(X^T)^{-1}-C(B-\mbox{$\frac{i}{2}$}\Omega_k)^{-1}C^T\geq0.
\end{equation}
Moreover, applying the Schur's complement Lemma to $\sigma_{n+k}-\frac{i}{2}\Omega_{n+k}$ we get
\begin{equation}
A-\mbox{$\frac{i}{2}$}\Omega_n-C(B-\mbox{$\frac{i}{2}$}\Omega_k)^{-1}C^T\geq0.
\end{equation}
Hence, the lhs of (\ref{eq:posB}) can be decomposed in a positive state-dependent term and in a map-dependent one:
\begin{equation}\label{eq:posC}
\underbrace{A-\mbox{$\frac{i}{2}$}\Omega_n-C(B-\mbox{$\frac{i}{2}$}\Omega_k)^{-1}C^T}_{\text{state-dependent}}+\underbrace{X^{-1}(Y-\mbox{$\frac{i}{2}$}\Omega_n)(X^T)^{-1}+\mbox{$\frac{i}{2}$}\Omega_n}_{\text{map-dependent}}\geq0~.
\end{equation}
This condition has to be satisfied for any $n+k$-modes state. Due to the Williamson's theorem we can derive that for any mixed state $\sigma_{\mbox{mixed}}$ there exists a pure state $\sigma_{\mbox{pure}}^{S}$ such that
\begin{equation}
\sigma_{\mbox{mixed}}=S~\mbox{diag}\{\nu_1,\nu_1,\dots,\nu_{n+k},\nu_{n+k}\}~S^T\geq\mbox{$\frac{1}{2}$}S~\mathds{1}_{n+k}~S^T=\sigma_{\mbox{pure}}^{S}~. \end{equation}
It is then sufficient to check that inequality (\ref{eq:posC}) holds for pure states to guarantee that it is satisfied for all states. By using  again the Williamson's Theorem we find that local symplectic transformations $S_n$ and $S_k$ can bring the covariance matrix of any pure $n+k$-mode Gaussian state $(S_n\oplus S_k)\sigma^{S}_{\mbox{pure}}(S_n\oplus S_k)^T$ to the normal form i.e. the block form with non-zero entries only on the diagonal of each block. For $k\leq n$ the blocks are
\begin{align}
A&=\mbox{$\frac{1}{2}$}S_n\left(\bigoplus_{j=1}^k \cosh r_j \mathds{1}\oplus\mathds{1}_{n-k}\right)S_n^T~,\\
B&=\mbox{$\frac{1}{2}$}S_k\left(\bigoplus_{j=1}^k \cosh r_j~\mathds{1}\right)S_k^T,\\
C&=\mbox{$\frac{1}{2}$}S_n\left(\begin{array}{c}
\bigoplus_{j=1}^k \sinh r_j~\Lambda\\
\hline
\diamond\end{array}\right)S_k^T~,
\end{align}
where $\diamond$ is an $2(n-k)\times2n$ null matrix. We have then
\begin{equation}
C(B-\mbox{$\frac{i}{2}$}\Omega_k)^{-1}C^T=\mbox{$\frac{1}{2}$}S_n\left(\bigoplus_{j=1}^k\left(\cosh r_j\mathds{1}-i\Omega\right)\oplus\mathbb{0}_{n-k}\right)S_n^T~.
\end{equation}
In consequence, (\ref{eq:posC}) is satisfied for any state if
\begin{equation}\label{eq:prop}
\mbox{$\frac{1}{2}$}S_n\left(i\Omega_k\oplus\mathds{1}_{n-k}\right)S_n^T+X^{-1}(Y-\mbox{$\frac{i}{2}$}\Omega_n)(X^T)^{-1}\geq0
\end{equation}
holds for every $S_n\in \text{Sp}(2n,\mathds{R})$. Notice that for every $S_n$
\begin{equation}
\mbox{$\frac{1}{2}$}S_n(i\Omega_k\oplus\mathds{1}_{n-k})S_n^{T}+X^{-1}(Y-\mbox{$\frac{i}{2}$}\Omega_n)(X^T)^{-1}\geq \mbox{$\frac{i}{2}$}\Omega_n+X^{-1}(Y-\mbox{$\frac{i}{2}$}\Omega)(X^T)^{-1}.
\label{psmallercp}
\end{equation}
This inequality implies that the lhs cannot have an eigenvalue smaller than the smallest eigenvalue of the rhs. To complete the proof of Theorem \ref{thm1} it is sufficient to show that there exists $S_n$ such that the lhs and the rhs have the same the smallest eigenvalue for any $1\leq k$. For $1\leq k\leq n$, this is guaranteed by Lemma \ref{lem}. Indeed, this lemma shows that there exists a symplectic orthogonal transformation $Q$ such that
\begin{equation}
  \min\left\{\mbox{eig}\left(\mbox{$\frac{1}{2}$}Q^T(i\Omega_k\oplus\mathds{1}_{n-k})Q+X^{-1}(Y-\mbox{$\frac{i}{2}$}\Omega_n)(X^T)^{-1}\right)\right\}= \min\left\{\mbox{eig}\left(\mbox{$\frac{i}{2}$}\Omega_n+X^{-1}(Y-\mbox{$\frac{i}{2}$}\Omega)(X^T)^{-1}\right)\right\}~.
\end{equation}
If $k>n$, then (\ref{eq:posC}) becomes equal to the lhs of (\ref{psmallercp}). Summarizing, the positivity condition (\ref{eq:kpos}) for $k=1$ is equivalent to the positivity condition for any $k\geq 1$. This completes the proof of Theorem 1.
\end{proof}


Alternatively, the theorem can be justified by an extension of Choi's theorem on continuous variable systems, noting that a single mode is already an infinite dimensional system. A formal establishment of this argument can be easily derived.

\section{Proof of Theorem~\ref{propom}}

\begin{proof}
Let us start with simplifying the CP condition (\ref{cpco}): since $X_\tau(t)$ is a $2\times2$ matrix, we have that $X_\tau(t)\Omega X_\tau^T(t)=\Omega\det X_\tau(t)$. This allows us to reduce the CP-divisibility condition (\ref{cpco}) to the following form
\begin{equation}
Y_\tau(t)+\frac{i}{2}(\det X_\tau(t)-1)\Omega\geq 0~,
\label{ygt}
\end{equation}
Moreover, noticing that
\begin{equation}
  X_\tau(t)=\left(\begin{array}{cc}
  (1+\tau\epsilon)+\tau\delta & \tau\phi \\
  \tau\alpha & (1+\tau\epsilon)-\tau\delta\end{array}\right)+o(\tau^2)
\end{equation}
we have
\begin{eqnarray}
  \det X_\tau(t)&=&(1+\tau\epsilon)^2-\tau^2(\delta^2-\alpha\phi)+o(\tau^4)=\nonumber\\
  &=&1+2\tau\epsilon+o(\tau^2)~,
\end{eqnarray}
and making use of (\ref{eq:intermediateXY}), we find the expression for $\epsilon_t$ (\ref{eq:eps}):
\begin{eqnarray}
  \epsilon_t&=&\lim_{\tau\rightarrow0}\frac{\det X_\tau(t)-1}{2\tau}= \lim_{\tau\rightarrow0}\frac{\det X_{t+\tau}X_t^{-1}-\det X_tX_t^{-1}}{2\tau}=\nonumber\\
   &=&\lim_{\tau\rightarrow0}\left(\frac{\det X_{t+\tau}-\det X_t}{2\tau}\right)\det X_t^{-1}=\frac{1}{2}\frac{d\det X_t}{dt}\frac{1}{\det X_t} =\nonumber\\
  &=& \frac{d~}{dt}\ln\left(\sqrt{|\det X_t|}\right)\nonumber
\end{eqnarray}

Now, since $Y_{\tau}$ is a $2\times 2$ real symmetric matrix, it can be diagonalized by orthogonal transformations. Moreover, through a symplectic transformation of the form $Z_z\equiv\mbox{diag}\{z,1/z\}$ it can be brought to a diagonal form proportional to the identity or to the Pauli matrix $\sigma_z$. The proportionality factor is $\mu_t$ such that
\begin{equation}\label{eq:mu}
   \mu_t^2=\mbox{sgn}(\delta_t)\delta_t~,
 \end{equation}
where $\delta_t\equiv\det \mathcal{Y}(t)$. Making use of (\ref{eq:intermediateXY}) one finds the expression for $\delta_t$ (\ref{eq:del}):
\begin{eqnarray}
  \det \mathcal{Y}(t)&=&\lim_{\tau\rightarrow0}\det\left(\frac{Y_\tau(t)}{\tau}\right)= \lim_{\tau\rightarrow0}\det \left(\frac{Y_{t+\tau}-X_\tau(t)Y_tX_\tau^T(t)}{\tau}\right)=\nonumber\\
  &=&\lim_{\tau\rightarrow0}\det \left(\frac{Y_{t+\tau}-X_{t+\tau}X_t^{-1}Y_tX_t^{-T}X_{t+\tau}^T}{\tau}\right)=\nonumber\\
  &=&\lim_{\tau\rightarrow0}\det \left[\frac{X_{t+\tau}\left(X_{t+\tau}^{-1}Y_{t+\tau}X_{t+\tau}^{-T}-X_t^{-1}Y_tX_t^{-T}\right)X_{t+\tau}^T}{\tau}\right]\nonumber\\
  &=&\lim_{\tau\rightarrow0}\det \left(\frac{X_{t+\tau}^{-1}Y_{t+\tau}X_{t+\tau}^{-T}-X_t^{-1}Y_tX_t^{-T}}{\tau}\right) \left(\det X_{t+\tau}\right)^2= \nonumber\\
  &=&\left(\det X_t\right)^2\det\left(\frac{d~}{dt}\left(X_t^{-1}Y_tX_t^{-T}\right)\right)\nonumber
\end{eqnarray}

Let us consider the case $\delta_t\geq0$, i.e.~$\mathcal{Y}(t)$ is positive definite or negative definite, and let $\mu_t$ be the simplectic eigenvalue with the sign of $\mathcal{Y}(t)$, i.e.~$\mu_t=\mbox{sgn}(\kappa_t)\sqrt{\delta_t}$, where
\begin{eqnarray}
  \kappa_t&=&\mbox{tr}~\mathcal{Y}(t)=\lim_{\tau\rightarrow0}\frac{\mbox{tr}~ Y_\tau(t)}{\tau}= \lim_{\tau\rightarrow0}\frac{\mbox{tr} \left(Y_{t+\tau}-X_\tau(t)Y_tX_\tau^T(t)\right)}{\tau}=  \nonumber\\
  &=&\lim_{\tau\rightarrow0}\frac{\mbox{tr} \left[X_{t+\tau}\left(X_{t+\tau}^{-1}Y_{t+\tau}X_{t+\tau}^{-T}-X_t^{-1}Y_tX_t^{-T}\right)X_{t+\tau}^T\right]}{\tau}=\nonumber \\ &=&\mbox{tr}\left(\frac{d~}{dt}\left(X_t^{-1}Y_tX_t^{-T}\right)X_t^TX_t\right)= \nonumber\\
  &=&\mbox{tr}\left(\frac{d~}{dt}\left(X_t^{-1}Y_tX_t^{-T}X_t^{T}X_t\right)\right) -\mbox{tr}\left(X_t^{-1}Y_tX_t^{-T}\frac{d~}{dt}\left(X_t^TX_t\right)\right)=\nonumber\\
  &=&\frac{d~}{dt}\mbox{tr}~Y_t-\mbox{tr}\left(X_t^{-1}Y_tX_t^{-T}\frac{dX_t^T}{dt}X_t\right) -\mbox{tr}\left(X_t^{-1}Y_tX_t^{-T}X_t^T\frac{dX_t}{dt}\right)=\nonumber\\
  &=&\frac{d~}{dt}\mbox{tr}~Y_t-\mbox{tr}\left(Y_tX_t^{-T}\frac{dX_t^T}{dt}\right)^T -\mbox{tr}\left(X_t^{-1}Y_t\frac{dX_t}{dt}\right)=\nonumber\\
  &=&\frac{d~}{dt}\mbox{tr}~Y_t-2\mbox{tr}\left(\frac{dX_t}{dt}X_t^{-1}Y_t\right) =\frac{d~}{dt}\mbox{tr}~Y_t-2~\mbox{tr}\left(\frac{d\ln|X_t|}{dt}Y_t\right)~.\nonumber
\end{eqnarray}

Since $Y_\tau(t)$ can be brought into its diagonal form by a symplectic transformation, $\Omega$ is invariant under this transformation and it doesn't change the sign of the inequality, we can rewrite the CP condition as
\begin{equation}
\mu_t~\mathds{1}+i~\epsilon_t~\Omega+o(\tau)\geq 0.
\end{equation}
The CP (infinitesimal) divisibility condition can then be easily expressed in terms of $\mu_t$ and $\epsilon_t$:
\begin{equation}\label{eq:muepsCPcond}
\mu_t\geq |\epsilon_t|\qquad\forall~t\geq0~.
\end{equation}
It is obvious that if $\mu_t<0$, i.e.~if ${\bf Y}(t)$ is negative definite, the above condition is never satisfied. In the case $\delta_t<0$, with $\mu_t=\sqrt{|\delta_t|}$, through a symplectic transformation we can bring (\ref{ygt}) into the form
\begin{equation}
\pm\mu_t~\sigma_z+i~\epsilon_t~\Omega+o(\tau)\geq 0,
\end{equation}
which is never satisfied.
\end{proof}

\section{Proof of Theorem~\ref{propok}}

\begin{proof}
Exploiting again the property that any 2x2 matrix divided by the square root of its determinant is a symplectic matrix, the P-divisibility condition (\ref{eq:p}) can be rewritten as
\begin{equation}
\forall~~S\in \text{Sp}(2,\mathds{R})~,\quad \frac{\det X_\tau(t)}{2}SS^T+Y_\tau(t)-\frac{i}{2}\Omega\geq  0~.
\end{equation}
We first consider the case $\delta_t\geq0$, the above inequality can be recast as
\begin{equation}
\forall~~S\in \text{Sp}(2,\mathds{R})~,\quad \frac{\det X_\tau(t)}{2}SS^T+\mu_t\tau\mathds{1}-\frac{i}{2}\Omega\geq  0~.
\end{equation}

Using the Euler decomposition of symplectic transformations $S=O_1Z_zO_2$, where $Z_z\equiv\mbox{diag}\{z,1/z\}$ with $z\in(0,1]$ and $O_i$ is an orthogonal matrix we can further simplify the P condition as follows
\begin{equation}
\forall~z\in(0,1]~,\quad \frac{\det X_\tau(t)}{2}Z_z+\mu_t\tau\mathds{1}-\frac{i}{2}\Omega\geq  0~.
\label{forol}
\end{equation}
At first order in $\tau$, the eigenvalues of the lhs of (\ref{forol}) are
\begin{eqnarray}
  \lambda_1&=&\left(\frac{2 ~\epsilon_t~ z^2}{z^4+1}+\mu_t \right) \tau ~ ,\\
  \lambda_2&=&\frac{z^4+1}{2
   z^2}+\left(\frac{\left(z^8+1\right) \epsilon_t }{z^6+z^2}+\mu_t \right)~ \tau ~.
\end{eqnarray}
We notice that $\lambda_2$ is always positive for small $\tau$, hence the positivity of the intermediate map depends only on $\lambda_1$; in particular we have that the intermediate map is positive if
\begin{equation}
  \frac{2 \epsilon_t~ z^2}{z^4+1}+\mu_t\geq0\qquad\forall~~z\in(0,1]~,
\end{equation}
which is equivalent to
\begin{equation}\label{eq:muepsPcond}
    \mu_t\geq\frac{|\epsilon_t|-\epsilon_t}{2}~.
\end{equation}

In the case of $\delta_t<0$ the P condition becomes
\begin{equation}
\forall~z\in(0,1]~,\quad \frac{\det X_\tau(t)}{2}Z_z\pm\mu\tau\sigma_z-\frac{i}{2}\Omega\geq  0~,
\end{equation}
which is never satisfied.
\end{proof}

\section{Proof of Eq.~(\ref{Lambda222})}

The condition $\Lambda_+\geq0$ can be rewritten as
\begin{eqnarray}
  &1+e^{2\int_0^t\epsilon_sds}\left(-1+2\int_0^te^{-2\int_0^r\epsilon_sds}\mu_rdr\right)\geq0\nonumber\\
  &e^{-2\int_0^t\epsilon_sds}-1+2\int_0^te^{-2\int_0^r\epsilon_sds}\mu_rdr\geq0\nonumber\\
  &-e^{-\int_0^t\epsilon_sds}\sinh\left(\int_0^t\epsilon_sds\right)+\int_0^te^{-2\int_0^r\epsilon_sds}\mu_rdr\geq0\nonumber
\end{eqnarray}
Noticing that
\begin{eqnarray}
  e^{-\int_0^t\epsilon_sds}\sinh\left(\int_0^t\epsilon_sds\right)
  &=&\int_0^t\frac{d}{dt}\left[e^{-\int_0^t\epsilon_sds}\sinh\left(\int_0^t\epsilon_sds\right)\right]dt=\nonumber\\
  &=&\int_0^te^{-2\int_0^r\epsilon_rdr}\epsilon_rdr\nonumber
\end{eqnarray}
we finally get
\begin{equation}
  \int_0^te^{-2\int_0^r\epsilon_rdr}\left(\mu_r-\epsilon_r\right)dr\geq0~\quad\forall~t>0~.
\end{equation}

Analogously it can be shown that the condition $\Lambda_-\geq0$ is equivalent to
\begin{equation}
  \int_0^te^{-2\int_0^r\epsilon_rdr}\left(\mu_r+\epsilon_r\right)dr\geq0~\quad\forall~t>0~.
\end{equation}

\end{document}